\definecolor{my_GREEN}{rgb}{0,0.5,0}
\definecolor{LightCyan}{rgb}{0.88,1,1}
\definecolor{my_Gray}{rgb}{0.85,0.85,0.85}
\definecolor{myCOLOR1}{RGB}{0,0,255}
\newcommand{\editKG}[1]{{\textcolor{black}{#1}}}
\newtheorem{theorem}{Theorem}[section]
\newtheorem{proposition}[theorem]{Proposition}
\newtheorem{remark}[theorem]{Remark}
\newcommand{\Rmnum}[1]{\expandafter\@slowromancap\romannumeral #1@}
\title{\LARGE \bf
Beacon-referenced Mutual Pursuit in Three Dimensions
}
\author{Kevin S. Galloway$^{\dagger}$ and Biswadip Dey$^{\ddagger}$
\thanks{$^{\dagger}$Kevin S. Galloway is with the Electrical and Computer Engineering Department, United States Naval Academy, Annapolis, MD 21402 USA. {\tt\small kgallowa@usna.edu}}
\thanks{$^{\ddagger}$Biswadip Dey is with the Department of Mechanical and Aerospace Engineering, Princeton University, Princeton, NJ 08544, USA. {\tt\small biswadip@princeton.edu}}
\thanks{*The second author's research was supported in part by the Office of Naval Research under ONR grant N00014-14-1-0635.}
}
\begin{document}
\newcommand{\RR}{\mathds{R}}
\newcommand{\rUnit}{\frac{\bf r}{\left|{\bf r}\right|}}
\newcommand{\zBarUnit}{\frac{\bar{\bf z}}{\left|\bar{\bf z}\right|}}
\newcommand{\riUnit}{\frac{{\bf r}_{i,i+1}}{\left|{\bf r}_{i,i+1}\right|}}
\newcommand{\riBUnit}{\frac{{\bf r}_{ib}}{\left|{\bf r}_{ib}\right|}}
\newcommand{\riBPlusUnit}{\frac{{\bf r}_{i+1,b}}{\left|{\bf r}_{i+1,b}\right|}}
\newcommand{\riPlusUnit}{\frac{{\bf r}_{i+1,i+2}}{\left|{\bf r}_{i+1,i+2}\right|}}
\newcommand{\riMinusUnit}{\frac{{\bf r}_{i-1,i}}{\left|{\bf r}_{i-1,i}\right|}}
\newcommand{\Bl}{\boldsymbol \ell}
\newcommand{\aP}{\bar{a}_{+}}
\newcommand{\aM}{\bar{a}_{-}}
%
%
\maketitle
\thispagestyle{empty}
\pagestyle{empty}
%
%
\begin{abstract}
Motivated by station-keeping applications in various unmanned settings, this paper introduces a steering control law for a pair of agents operating in the vicinity of a fixed beacon in a three-dimensional environment. This feedback law is a modification of the previously studied three-dimensional constant bearing (CB) pursuit law, in the sense that it incorporates an additional term to allocate attention to the beacon. We investigate the behavior of the closed-loop dynamics for a two-agent \emph{mutual pursuit} system in which each agent employs the beacon-referenced CB pursuit law with regards to the other agent and a stationary beacon. Under certain assumptions on the associated control parameters, we demonstrate that this problem admits \emph{circling equilibria} wherein the agents move on circular orbits with a common radius, in planes perpendicular to a common axis passing through the beacon. As the common radius and distances from the beacon are determined by choice of parameters in the feedback law, this approach provides a means to engineer desired formations in a three-dimensional setting.
\end{abstract}
\begin{keywords}
Cooperative control; Multi-agent systems; Pursuit problems; Autonomous mobile robots
\end{keywords}
%
%

%
%
%
\section{Introduction}
\label{sec:Intro}
%
As pursuit and collective motion play significant roles in various contexts of robotics and engineering, it seems appealing to seek inspiration from nature, which abounds with many such examples. \editKG{Among the} various possible ways to pursue and intercept a moving target, evidence of constant bearing (CB) pursuit strategy can be \editKG{observed} in a variety of animal species (e.g. dogs \cite{Shaffer_Dog_2004}, flies \cite{Collett1975,Osorio281}, humans \cite{CHARDENON200213}, raptors \cite{Tucker3745}). The CB pursuit strategy dictates that an agent should move towards its target in such a way that the angle between the baseline (alternatively known as the line-of-sight) connecting the two individuals and its own velocity remains constant. By prescribing a fixed offset between the baseline and the pursuer's velocity, this strategy provides further generalization of the classical pursuit strategy (wherein the pursuer always moves directly towards the current location of its target).

The work in \cite{Pavone_ASME_commonOffset} exploited this pursuit strategy as a building block for designing formations in an engineered setting, and demonstrated that, by applying a homogeneous\footnote{\textit{Homogeneous} in the sense that each individual attempts to maintain the same angular offset between its velocity and the baseline towards neighbor.} CB pursuit strategy in a cyclic manner, a collective of agents eventually converge to a common rendezvous point, a circular formation or a logarithmic spiral pattern. Another line of work \cite{Kevin_2011_CDC, Galloway_PRS_13,Galloway_PRS_16} explored a more general setting wherein individual agents pursue each other using a heterogeneous CB pursuit strategy, and has shown existence and stability of a richer class of behaviors (circular motion, rectilinear motion, shape preserving spirals and periodic orbits). While a majority of the prior research has considered only planar settings to investigate CB pursuit strategy in a cyclic interaction, the work in \cite{CP_3D_Kevin_2010,Galloway_PRS_16} considers the three-dimensional setting as well.

While this line of research has demonstrated existence of circling equilibria in which agents moved on a common circular trajectory, both the location of the circumcenter of the formation (with respect to some inertial frame) and its size were determined by initial conditions rather than control parameters. To overcome this aspect and to broaden its scope from a design perspective, we introduced a modified version of the CB control law in our earlier work \cite{KSG_BD_ACC_2015, KSG_BD_ACC_2016, KSG_BD_arXiv_17}. In this new setting, the pursuer pays attention to a beacon (which can represent an attractive food source in a biological setting, or some target of interest for an unmanned vehicle), in addition to its neighbor. Another line of work \cite{Mallik2015ConsensusApplications, Daingade2015AImplementation} has also investigated this beacon-referenced (or target-centric) cyclic pursuit framework, albeit their work uses a different control formulation. 

In the current work, we extend this beacon-referenced approach to the three dimensional-setting. We first introduce a beacon-referenced version of the CB pursuit law in three dimensions, and then the mutual pursuit scenario in which two agents apply this pursuit law to one another as well as a stationary beacon. Earlier work \cite{MISCHIATI20114483, MISCHIATI2012894, UH_BD_ICRA_15} has demonstrated that mutual pursuit can lead to a variety of interesting motion patterns, while providing better tractability from an nonlinear analysis perspective, and it can be viewed as a building block towards understanding the more general cyclic pursuit framework.

This paper is organized as follows. We begin by describing the self-steering particle model for agents moving in the three dimensions. Then, in the later part of Section~\ref{sec:Prob_SetUp}, we introduce the beacon-referenced constant bearing pursuit law. As the underlying self-steering particle model has a 1-dimensional rotational invariance, we can describe the evolution of the mutual pursuit system by considering a reduced system evolving on $\mathds{R}^3 \times \mathds{R}^3 \times \mathcal{S}^2 \times \mathcal{S}^2$. In the rest of Section~\ref{sec:Closed-Loop-Dyna}, we define the effective shape space, identify the geometric constraints associated with these variables, and derive the closed loop dynamics after making some simplifying assumptions about the control parameters. In Section~\ref{sec:Rel_Equilibrium}, we analyze the closed loop shape dynamics, and explore existence conditions and characterization of the associated relative equilibria. Finally we conclude in Section~\ref{sec:Conclusion}.
%
%
%

%
%
%
\section{Modeling Mutual Interactions}
\label{sec:Prob_SetUp}
\subsection{Generative Model: Agents as Self-Steering Particles}
%
Similar to earlier works (\cite{Justh_PSK_3Dformation, CP_3D_Kevin_2010}), we treat the agents as unit-mass self-steering particles moving along twice-differentiable paths in a three-dimensional environment. This allows us to describe the motion of an agent in terms of its natural Frenet frame \cite{Nat_Frenet_Bishop}, defined by its position $\mathbf{r}_i$ (with respect to an inertial reference frame) and an orthonormal triad of vectors $[\mathbf{x}_i,\mathbf{y}_i,\mathbf{z}_i]$. Then, by constraining the agents to move at equal and nonvanishing speed, we can assume without loss of generality that the agents are moving at unit speed, and express the dynamics of a pair of agents as
\begin{equation}
\begin{aligned}
\dot{\mathbf{r}}_i &= \mathbf{x}_i \\
\dot{\mathbf{x}}_i &= u_i \mathbf{y}_i + v_i \mathbf{z}_i \\
\dot{\mathbf{y}}_i &= - u_i \mathbf{x}_i \\
\dot{\mathbf{z}}_i &= - u_i \mathbf{x}_i, 
\end{aligned} 
\label{Explicit_MODEL}
\end{equation}
for $i=1,2$. Here, $u_i$ and $v_i$ are the natural curvatures viewed as gyroscopic steering controls. Moreover, we assume that the \textit{beacon} is located at position $\mathbf{r}_b \in \mathds{R}^3$. Then it directly follows that $\mathcal{M}_c = SE(3) \times SE(3) \times \mathds{R}^3$ defines the underlying \textit{configuration space} of dimension 15. 

However, as we are only interested in the agents' motion relative to each other and to the beacon, we can formulate a reduction to the 9-dimensional \textit{shape space}, defined as $\mathcal{M}_s = \mathcal{M}_c / SE(3)$. Similar to the scalar shape variables employed in the planar case \cite{KSG_BD_ACC_2015}, we can define the following set of scalar variables (for $i=1,2$)
\begin{equation}
\begin{aligned}
&
\bar{x}_{i} \triangleq \mathbf{x}_i \cdot \frac{\mathbf{r}_{i,i+1}}{|\mathbf{r}_{i,i+1}|}, 
&&
\bar{y}_{i} \triangleq \mathbf{y}_i \cdot \frac{\mathbf{r}_{i,i+1}}{|\mathbf{r}_{i,i+1}|},
 &&
\bar{z}_{i} \triangleq \mathbf{z}_i \cdot \frac{\mathbf{r}_{i,i+1}}{|\mathbf{r}_{i,i+1}|},
\\
&
\bar{x}_{ib} \triangleq \mathbf{x}_i \cdot \frac{\mathbf{r}_{ib}}{|\mathbf{r}_{ib}|}, 
&&
\bar{y}_{ib} \triangleq \mathbf{y}_i \cdot \frac{\mathbf{r}_{ib}}{|\mathbf{r}_{ib}|}, 
&&
\bar{z}_{ib} \triangleq \mathbf{z}_i \cdot \frac{\mathbf{r}_{ib}}{|\mathbf{r}_{ib}|}, 
\\
& 
\rho_{ib} \triangleq |\mathbf{r}_{ib}|,
&&
\rho \triangleq |\mathbf{r}_{12}|, 
&&
\tilde{x} \triangleq \mathbf{x}_1 \cdot \mathbf{x}_2
\end{aligned}
\end{equation}
to parametrize the shape space $\mathcal{M}_s$. (See Fig. \ref{Scalar_Shapes}.) Here, $\mathbf{r}_{ij} = \mathbf{r}_i - \mathbf{r}_j$, $i,j \in \{1,2\}$ represents the position of agent $i$ relative to agent $j$, $\mathbf{r}_{1b}$ and $\mathbf{r}_{2b}$ represent the positions of agent 1 and 2 relative to the beacon located at $\mathbf{r}_b$, respectively, and addition in the index variables should be interpreted modulo $2$. (This convention will be employed throughout this work.) Clearly, these variables overparameterize the underlying shape space. However, this overparameterization can be taken into account by considering the appropriate constraints (\emph{e.g.} $\bar{x}_i^2 + \bar{y}_i^2 + \bar{z}_i^2 = 1$, $i = 1,2$).

In what follows, we will prescribe that the agents should not be collocated with each other or with the beacon, i.e. we assume $\rho>0$, $\rho_{1b}>0$, and $\rho_{2b}>0$. These assumptions are made to keep the pursuit laws well-defined, but are not necessarily enforced by the closed-loop system dynamics. 
\begin{figure}[t!]
\begin{center}
  \includegraphics[height=2in]{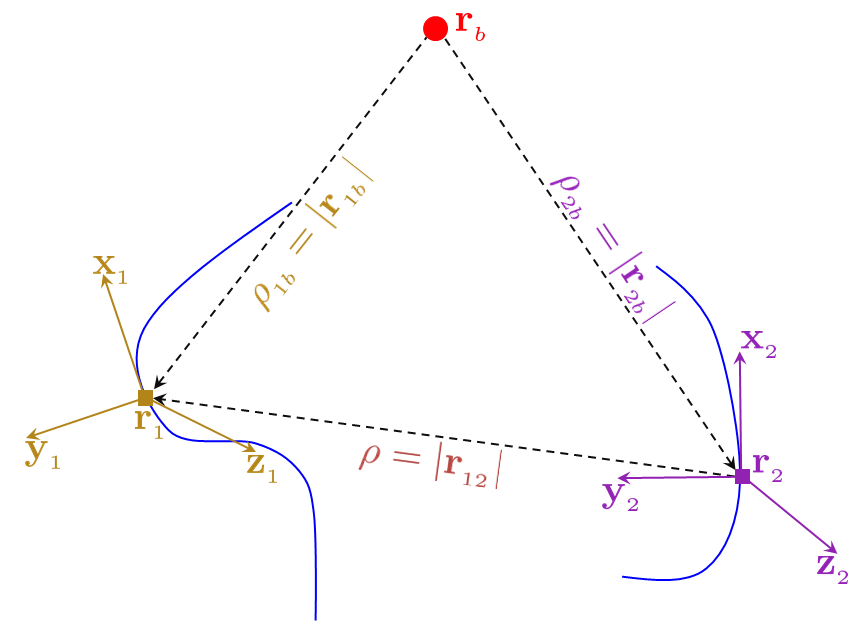}
  \caption{\small{Agent trajectories, together with the corresponding natural Frenet frames, for a beacon-referenced mutual pursuit system in a three-dimensional setting.}}
  \vspace{-1.5em}
  \label{Scalar_Shapes}
\end{center}
\end{figure}
%
%
%

%
%
%
\subsection{Beacon-referenced Constant Bearing Pursuit in Three Dimensions}
%
\editKG{Previous work in \cite{CP_3D_Kevin_2010,Galloway_PRS_16} introduced and analyzed a feedback pursuit law for executing the constant bearing (CB) strategy in three dimensions. In what follows, we propose a modified version of the CB that includes an additional term referenced to the bearing toward the beacon.} Similar to our previous work for the planar setting \cite{KSG_BD_ACC_2015}, we construct this feedback law as a convex combination of two fundamental building blocks, expressed as
\begin{equation}
\begin{aligned}
u_i &= (1 - \lambda)u_i^{CB} + \lambda u_i^B
\\
v_i &= (1 - \lambda)v_i^{CB} + \lambda v_i^B
\end{aligned}
\label{Steering_Feedback_top_level}
\end{equation}
for $i = 1,2$, where $\lambda \in [0,1]$ maintains a balance between the influence of the beacon and that of the neighboring agent. In this feedback law \eqref{Steering_Feedback_top_level}, $u_i^{CB}$, $v_i^{CB}$ are governed by the original CB pursuit law \cite{CP_3D_Kevin_2010}, and $u_i^B$, $v_i^B$ represent the deviation from a desired bearing toward the beacon, \editKG{as described in detail below.}

In particular, by letting $\mu_i > 0$ denote a positive control gain, we choose
\begin{subequations}
\begin{align}
u_i^{CB} &= -\mu_i (\bar{x}_i - a_i)\bar{y}_i 
\nonumber \\
& \qquad 
- \frac{1}{\left|{\bf r}_{i,i+1}\right|}\left[{\bf z}_i \cdot \left(\dot{\bf r}_{i,i+1} \times \riUnit \right) \right]
\label{u_CB_i} \\
v_i^{CB} &= -\mu_i (\bar{x}_i - a_i)\bar{z}_i 
\nonumber \\
& \qquad 
+ \frac{1}{\left|{\bf r}_{i,i+1}\right|}\left[{\bf y}_i \cdot \left(\dot{\bf r}_{i,i+1} \times \riUnit \right) \right],
\label{v_CB_i}
\end{align}
\end{subequations}
where the parameter $a_i \in [-1,1]$ represents the desired offset between the heading of agent $i$ and its bearing toward agent $(i+1)$. We choose the beacon tracking component as
\begin{subequations}
\begin{align}
u_i^{B} &= -\mu_i^b (\bar{x}_{ib} - a_{ib})\bar{y}_{ib}
\label{u_Beacon_i} \\
v_i^{B} &= -\mu_i^b (\bar{x}_{ib} - a_{ib})\bar{z}_{ib},
\label{v_Beacon_i}
\end{align}
\end{subequations}
where $\mu_i^b > 0$ is the corresponding control gain and the parameter $a_{ib} \in [-1,1]$ represents the desired offset between the heading of agent $i$ and its bearing toward the beacon. In general, the neighbor- tracking goal may conflict with the beacon-referencing goal, \editKG{i.e. there are no guarantees that both goals can be attained.} \editKG{Also, for $\lambda = 0$, \eqref{Steering_Feedback_top_level} simplies to the already analyzed CB pursuit law from \cite{CP_3D_Kevin_2010}, and for $\lambda = 1$ the system devolves to simple beacon-tracking by multiple independent agents. Therefore we will assume $\lambda \in (0,1)$ for the duration of this work.}
%
%
%

%
%
%
\section{Closed Loop Shape Dynamics}
\label{sec:Closed-Loop-Dyna}
%
In \cite{Justh_PSK_3Dformation}, the authors have demonstrated the importance of considering a reduced system evolving on 
$\mathds{R}^3 \times \mathcal{S}^2 \times \mathcal{S}^2$ for analyzing certain types of two-agent systems with their dynamics defined on $SE(3) \times SE(3)$. Before delving into further analysis, we investigate similar aspects for the system under consideration, and show existence of a corresponding \textit{reduced space}. We begin by computing
\begin{align}
\dot{\mathbf{x}}_i 
&=
(1 - \lambda)\Big[ u_i^{CB}\mathbf{y}_i + v_i^{CB}\mathbf{z}_i \Big] + 
\lambda \Big[ u_i^B\mathbf{y}_i + v_i^B\mathbf{z}_i\Big]
\nonumber \\
&=
-(1 - \lambda)\mu_i (\bar{x}_i - a_i)\Big[ \bar{y}_i\mathbf{y}_i + \bar{z}_i\mathbf{z}_i \Big] 
\nonumber \\
& \qquad 
-\lambda\mu_i^b (\bar{x}_{ib} - a_{ib}) \Big[ \bar{y}_{ib}\mathbf{y}_i + \bar{z}_{ib}\mathbf{z}_i\Big]
\nonumber \\
& \qquad
- \frac{(1 - \lambda)}{\left|{\bf r}_{i,i+1}\right|}
\left[\left({\bf z}_i \cdot \left(\dot{\bf r}_{i,i+1} \times \riUnit \right) \right)\mathbf{y}_i \right.
\nonumber \\
& \qquad \qquad 
- \left.\left({\bf y}_i \cdot \left(\dot{\bf r}_{i,i+1} \times \riUnit \right) \right)\mathbf{z}_i\right]
\label{lat_Accln}
\end{align}
for $i \in \{1,2\}$. Then by using the \emph{BAC-CAB} identity of vector algebra, we can express \eqref{lat_Accln} as
\begin{align}
\dot{\mathbf{x}}_i 
&=
-(1 - \lambda)\mu_i (\bar{x}_i - a_i) \left[ \frac{{\bf r}_{i,i+1}}{\left|{\bf r}_{i,i+1}\right|} - \bar{x}_i\mathbf{x}_i \right] 
\nonumber \\
& \qquad 
-\lambda\mu_i^b (\bar{x}_{ib} - a_{ib}) \left[ \frac{{\bf r}_{ib}}{\left|{\bf r}_{ib}\right|} - \bar{x}_{ib}\mathbf{x}_i \right]
\nonumber \\
& \qquad
+ \frac{(1 - \lambda)}{\left|{\bf r}_{i,i+1}\right|} \left[ {\bf x}_i \times \left(\dot{\bf r}_{i,i+1} \times \riUnit \right) \right].
\label{lat_Accln_Self_Contained}
\end{align}
\begin{figure}[t!]
\begin{center}
  \includegraphics[width=0.45\textwidth]{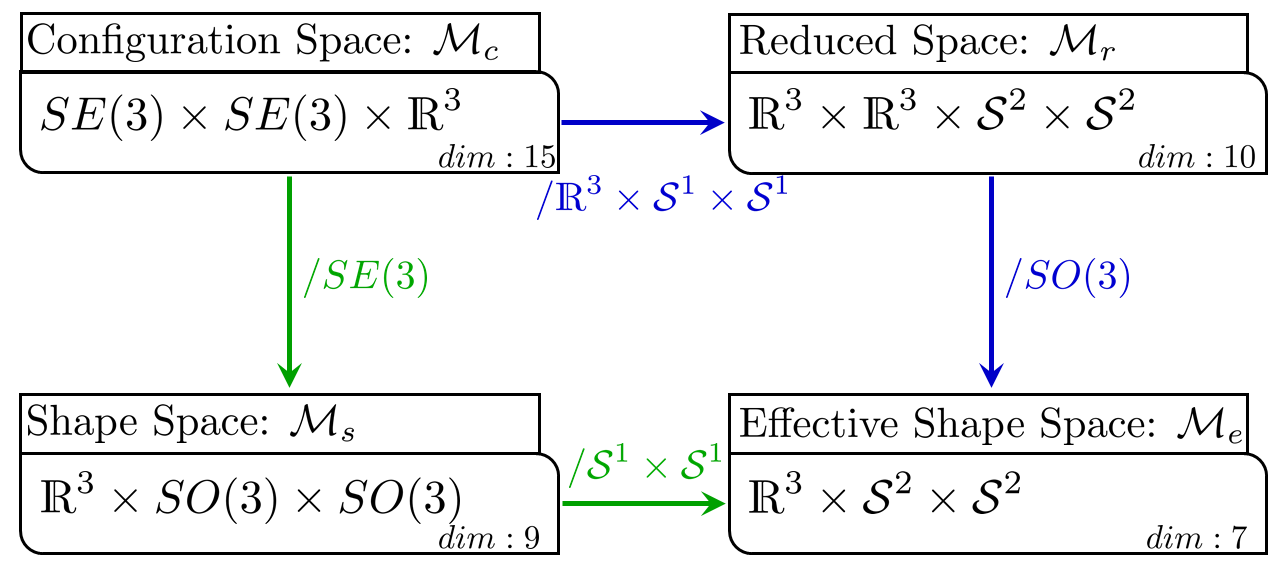}
  \caption{\small{Illustration of the reduction from the 15-dimensional configuration space $\mathcal{M}_c$ to a 7-dimensional effective shape space $\mathcal{M}_e$. The blue pathway involves an intermediate reduced space $\mathcal{M}_r$, while the green pathway passes through the shape space $\mathcal{M}_s$.}}
  \vspace{-1.5em}
  \label{Fig:Reduction_of_Spaces}
\end{center}
\end{figure}

As $\dot{\mathbf{r}}_{ib} = \mathbf{x}_i$ and $\mathbf{r}_{i,i+1}$ can also be expressed as $({\bf r}_{i,b} - {\bf r}_{i+1,b})$, it directly follows from \eqref{lat_Accln_Self_Contained} that the evolution of $(\mathbf{r}_{1b}, \mathbf{r}_{2b}, \mathbf{x}_1, \mathbf{x}_2)$ is governed by a self-contained dynamics on the \emph{reduced space} $\mathcal{M}_r = \mathds{R}^3 \times \mathds{R}^3 \times \mathcal{S}^2 \times \mathcal{S}^2$ of dimension 10, \editKG{as illustrated in Figure \ref{Fig:Reduction_of_Spaces}}. Then, after solving the evolution of this reduced dynamics, one can reconstruct the evolution of the complete frame $[\mathbf{x}_i, \mathbf{y}_i, \mathbf{z}_i]$ by using the rule of quadrature. With this observation, we focus on the reduced dynamics on $\mathcal{M}_r$, instead of the full dynamics defined on $\mathcal{M}_c$.

Furthermore, the reduced dynamics on $\mathcal{M}_r$ is invariant to any rotation with respect to an inertial reference. This allows us to carry out further reduction, and focus our attention to a reduced system defined on the 7-dimensional \emph{effective shape space} $\mathcal{M}_e$. As we will see in the later analysis, the following set of scalar variables provide an efficient parametrization of this effective shape space:
\begin{equation}
\begin{aligned}
&
\bar{x}_{1} = \mathbf{x}_1 \cdot \frac{\mathbf{r}_{12}}{|\mathbf{r}_{12}|}, 
&&
\bar{x}_{2} = \mathbf{x}_2 \cdot \frac{\mathbf{r}_{21}}{|\mathbf{r}_{21}|},
&&
\tilde{x} = \mathbf{x}_1 \cdot \mathbf{x}_2
\\
&
\bar{x}_{1b} = \mathbf{x}_1 \cdot \frac{\mathbf{r}_{1b}}{|\mathbf{r}_{1b}|}, 
&&
\bar{x}_{2b} = \mathbf{x}_2 \cdot \frac{\mathbf{r}_{2b}}{|\mathbf{r}_{2b}|}
&&
\\
& 
\rho_{1b} = |\mathbf{r}_{1b}|,
&&
\rho_{2b} = |\mathbf{r}_{2b}|,
&&
\rho = |\mathbf{r}_{12}|.
\end{aligned}
\label{scalar_Variables_EffShape}
\end{equation}
As we will see in the following subsection, these scalar variables \eqref{scalar_Variables_EffShape} are subject to appropriate constraints of codimension 1.
%
%

%
%
\subsection{Constraints on the Effective Shape Space Variables}
\label{sec:Constraints}
%
If the vectors $\mathbf{r}_{1b}$, $\mathbf{r}_{2b}$ and $\mathbf{r}_{12}$ are collinear, in addition to lying on the same plane (which directly follows from their definition), either of the following constraints shall hold true:
\begin{subequations}
\begin{align}
\rho_{1b} + \rho_{2b} &= \rho
\label{Constrain_1a}
\\
\textrm{or,} \qquad |\rho_{1b} - \rho_{2b}| &= \rho.
\label{Constrain_1b}
\end{align}
\end{subequations}

However, even if they are not collinear, we can still exploit the fact that $\mathbf{r}_{1b} - \mathbf{r}_{2b} = \mathbf{r}_{12}$, and obtain the relationships 
\begin{subequations}
\begin{align}
\rho_{1b} \bar{x}_{1b} - \rho_{2b} \left( \mathbf{x}_1 \cdot \frac{\mathbf{r}_{2b}}{|\mathbf{r}_{2b}|} \right)
&= 
\rho \bar{x}_1
\label{Constraints_2_1a}
\\
\textrm{and,} \quad
\rho_{1b} \left( \mathbf{x}_2 \cdot \frac{\mathbf{r}_{1b}}{|\mathbf{r}_{1b}|} \right) - \rho_{2b} \bar{x}_{2b} 
&= 
- \rho \bar{x}_2
\label{Constraints_2_1b}
\end{align} 
\end{subequations}
by taking their projections on the normalized velocities $\mathbf{x}_1$ and $\mathbf{x}_2$, respectively. As the dot-product of two unit vectors lies in the interval $[-1,1]$, \eqref{Constraints_2_1a}-\eqref{Constraints_2_1b} lead to the following inequality constraints:
\begin{subequations}
\begin{align}
&
- \rho_{2b} \leq \rho_{1b} \bar{x}_{1b} - \rho \bar{x}_1 \leq \rho_{2b}
\label{Constraints_2a}
\\
\textrm{and,} \quad
&
- \rho_{1b} \leq \rho_{2b} \bar{x}_{2b} - \rho \bar{x}_2 \leq \rho_{1b}.
\label{Constraints_2b}
\end{align} 
\end{subequations}
\editKG{Also, the Law of Cosines requires that
\begin{align}
\rho_{1b}-\rho_{2b} \leq \rho \leq \rho_{1b}+\rho_{2b},
\label{Constraints_2c}
\end{align}
with strict inequality if the agents are not collinear.}

In addition to these inequality constraints, we can also demonstrate that the underlying geometry leads to an additional constraint which poses restriction on the possible values of $\tilde{x}$ for some fixed values of $\bar{x}_{1}$, $\bar{x}_{2}$, $\bar{x}_{1b}$, $\bar{x}_{2b}$, $\rho$, $\rho_{1b}$ and $\rho_{2b}$, i.e. for the rest of the shape variables. As $\mathbf{r}_{1b}$, $\mathbf{r}_{2b}$ and $\mathbf{r}_{12}$ constitute a triangle, these three vectors lie on a plane. It readily follows that for a fixed value of $\bar{x}_{i}$, the normalized velocity vector $\mathbf{x}_{i}$ lies on a particular circle around $\mathbf{r}_{12}$ (or $\mathbf{r}_{21}$) which itself lies on the surface of a unit sphere. This circle is marked as $\mathcal{C}_i$ in Figure~\ref{Fig:Constraints}. In a similar way, a fixed value of $\bar{x}_{ib}$ forces $\mathbf{x}_{i}$ to lies on a particular circle around $\mathbf{r}_{ib}$ which itself lies on the surface of a unit sphere (shown as $\mathcal{C}_{ib}$ in Figure~\ref{Fig:Constraints}). Clearly, these two circles $\mathcal{C}_i$ and  $\mathcal{C}_{ib}$ intersect \editKG{(at most) at} two points $P_i'$ and $P_i''$. Furthermore, it can be shown that $P_i'$ and $P_i''$ are reflections of each other with respect to the plane containing $\mathbf{r}_{1b}$, $\mathbf{r}_{2b}$ and $\mathbf{r}_{12}$. As a consequence, $\tilde{x} = \mathbf{x}_1 \cdot \mathbf{x}_2$ can assume one out of only two possible values. In what follows, we will see that this constraint can be exploited in the analysis of the closed loop dynamics.
\begin{figure}[b!]
\begin{center}
  \includegraphics[width=0.3\textwidth]{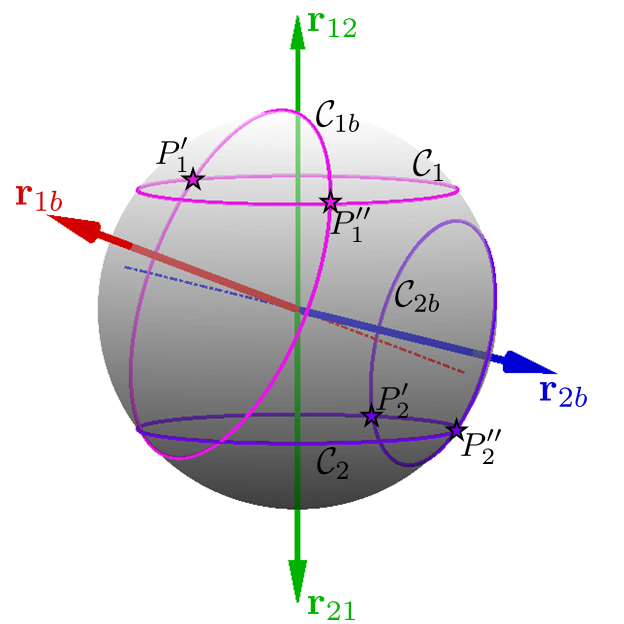}
  \caption{\small{Illustration of the constraints on effective shape space Variables. The \emph{red}, \emph{blue} and \emph{green} arrows represent directions of the relative position vectors $\mathbf{r}_{1b}$, $\mathbf{r}_{2b}$ and $\mathbf{r}_{12}$ (or $\mathbf{r}_{21}$), respectively.}}
  \vspace{-1.5em}
  \label{Fig:Constraints}
\end{center}
\end{figure}
%
%

%
%
\subsection{Closed Loop Dynamics on the Effective Shape Space}
%
Before going into detailed analysis of the dynamics at hand, we introduce the following simplifying assumptions\footnote{\editKG{We introduce these simplifying assumptions for the sake of mathematical tractability in this initial analysis. For future work, we intend to relax some of these assumptions to explore the broader space of possible system behaviors.}}:
\begin{itemize}
\item[(A1)] The controller gains ($\mu_i$ and $\mu_i^b$) are equal and common for both agents, i.e. $\mu_1 = \mu_2 = \mu_1^b = \mu_2^b = \mu$.
\item[(A2)] The bearing offset parameters with respect to the beacon are common for both agents, i.e. $a_{1b} = a_{2b} = a_0$.
\item[(A3)] The bearing offset parameters with respect to the other agent are the same for both agents, i.e. $a_1 = a_2 = a$.
\end{itemize}
Under these three assumptions (A1)-(A3), the following set of self-contained equations describe the closed-loop shape dynamics \editKG{on the effective shape space}:
\begin{subequations}
\begin{align}
\text{\footnotesize{ $
\dot{\rho} 
$}}
&
\text{\footnotesize{ $
= \bar{x}_1 + \bar{x}_{2} 
$}}
\label{eqn:2AgentShapeDynamics_rho}
\\
\text{\footnotesize{ $
\dot{\rho}_{1b} 
$}}
&
\text{\footnotesize{ $
= \bar{x}_{1b}
$}}
\label{eqn:2AgentShapeDynamics_rho-1b}
\\
\text{\footnotesize{ $
\dot{\rho}_{2b} 
$}}
&
\text{\footnotesize{ $
= \bar{x}_{2b}
$}}
\label{eqn:2AgentShapeDynamics_rho-2b}
\\
\text{\footnotesize{ $
\dot{\bar{x}}_1
$}}
&
\text{\footnotesize{ $
= \frac{\lambda}{\rho} \Bigl(1 - \tilde{x} - \bar{x}_1^2 - \bar{x}_1 \bar{x}_{2} \Bigr) - (1-\lambda)\mu (\bar{x}_1 - a) \left(1 - \bar{x}_1^2 \right)
$}}
\nonumber \\
& \qquad 
\text{\footnotesize{ $
-\lambda\mu(\bar{x}_{1b}-a_0)\left(\frac{\rho_{1b}^2 +\rho^2 - \rho_{2b}^2 }{2\rho \rho_{1b}} - \bar{x}_{1b}\bar{x}_1 \right), 
$}}
\label{eqn:2AgentShapeDynamics_x-1}
\\
\text{\footnotesize{ $
\dot{\bar{x}}_2
$}}
&
\text{\footnotesize{ $
= \frac{\lambda}{\rho} \Bigl(1 - \tilde{x} - \bar{x}_2^2 - \bar{x}_1 \bar{x}_{2} \Bigr) - (1-\lambda)\mu (\bar{x}_2 - a) \left(1 - \bar{x}_2^2 \right)  
$}}
\nonumber \\
& \qquad 
\text{\footnotesize{ $
-\lambda\mu(\bar{x}_{2b}-a_0)\left(\frac{\rho_{2b}^2 +\rho^2 - \rho_{1b}^2 }{2\rho \rho_{2b}} - \bar{x}_{2b}\bar{x}_2 \right),
$}}
\label{eqn:2AgentShapeDynamics_x-2}
\\
\text{\footnotesize{ $
\dot{\bar{x}}_{1b} 
$}}
&
\text{\footnotesize{ $
= - (1-\lambda) \Bigl(\mu (\bar{x}_1 - a) + \frac{1 - \tilde{x}}{\rho}\Bigr) \left(\frac{\rho_{1b}^2 +\rho^2 - \rho_{2b}^2 }{2\rho \rho_{1b}} - \bar{x}_{1b}\bar{x}_1 \right)
$}}
\nonumber \\
& \qquad 
\text{\footnotesize{ $
- (1-\lambda) \frac{\bar{x}_1}{\rho} \Biggl(\left(\frac{\rho_{2b}}{\rho_{1b}}\right)\bar{x}_{2b} - \left(\frac{\rho}{\rho_{1b}}\right)\bar{x}_{2}  - \bar{x}_{1b}\tilde{x} \Biggr) 
$}}
\nonumber \\
& \qquad 
\text{\footnotesize{ $
-\left(\lambda\mu(\bar{x}_{1b}-a_0) - \frac{1}{\rho_{1b}}\right)\Bigl(1 - \bar{x}_{1b}^2 \Bigr),
$}}
\label{eqn:2AgentShapeDynamics_x-1b}
\\
\text{\footnotesize{ $
\dot{\bar{x}}_{2b} 
$}}
& 
\text{\footnotesize{ $
= - (1-\lambda)\Bigl(\mu (\bar{x}_2 - a) + \frac{1 - \tilde{x}}{\rho}\Bigr) \left(\frac{\rho_{2b}^2 +\rho^2 - \rho_{1b}^2 }{2\rho \rho_{2b}} - \bar{x}_{2b}\bar{x}_2 \right)
$}}
\nonumber \\
& \qquad 
\text{\footnotesize{ $
- (1-\lambda)\frac{\bar{x}_2}{\rho} \Biggl(\left(\frac{\rho_{1b}}{\rho_{2b}}\right)\bar{x}_{1b} - \left(\frac{\rho}{\rho_{2b}}\right)\bar{x}_{1}  - \bar{x}_{2b}\tilde{x} \Biggr) 
$}}
\nonumber \\
& \qquad 
\text{\footnotesize{ $
-\left(\lambda\mu(\bar{x}_{2b}-a_0) - \frac{1}{\rho_{2b}}\right)\Bigl(1 - \bar{x}_{2b}^2 \Bigr),
$}}
\label{eqn:2AgentShapeDynamics_x-2b}
\\
\text{\footnotesize{ $
\dot{\tilde{x}}
$}}
&
\text{\footnotesize{ $
= -\lambda\mu(\bar{x}_{2b}-a_0)\left(-\left(\frac{\rho}{\rho_{2b}}\right)\bar{x}_{1} + \left(\frac{\rho_{1b}}{\rho_{2b}}\right)\bar{x}_{1b}  - \bar{x}_{2b}\tilde{x}\right)
$}}
\nonumber \\
& \qquad 
\text{\footnotesize{ $
-\lambda\mu(\bar{x}_{1b}-a_0)\left(\left(\frac{\rho_{2b}}{\rho_{1b}}\right)\bar{x}_{2b} + \left(\frac{\rho}{\rho_{1b}}\right)(-\bar{x}_{2})  - \bar{x}_{1b}\tilde{x}\right) 
$}}
\nonumber \\
& \qquad 
\text{\footnotesize{ $
- (1-\lambda) \Biggl[ \left(\mu (\bar{x}_1 - a) + \frac{1 - \tilde{x}}{\rho} \right) \left(-\bar{x}_2 - \tilde{x}\bar{x}_1 \right) \Biggr.
$}}
\nonumber \\
& \qquad \;
\text{\footnotesize{ $
\Biggl. + \bar{x}_1 \left(\frac{1-\tilde{x}^2}{\rho}\right) \Biggr]
- (1-\lambda) \Biggl[ \bar{x}_2 \left(\frac{1-\tilde{x}^2}{\rho}\right) \Biggr.
$}}
\nonumber \\
& \qquad \;
\text{\footnotesize{ $
\Biggl. + \left(\mu (\bar{x}_2 - a) + \frac{1 - \tilde{x}}{\rho} \right) \left(-\bar{x}_1 - \tilde{x}\bar{x}_2 \right) \Biggr]
$}}.
\label{eqn:2AgentShapeDynamics_x-tilde}
\end{align}
\end{subequations}
%
%
%

%
%
%
\section{Existence of Circling Equilibria}
\label{sec:Rel_Equilibrium}
%
The rest of this work is focused on determining conditions for existence of equilibria for the closed-loop dynamics \eqref{eqn:2AgentShapeDynamics_rho}-\eqref{eqn:2AgentShapeDynamics_x-tilde}. These equilibria correspond to the agents moving on circular orbits with a common radius, in planes perpendicular to a common axis passing through the beacon, and therefore we will refer to them as \textit{circling equilibria}. We proceed by setting $\dot{\rho} = \dot{\rho}_{1b} = \dot{\rho}_{2b} = 0$, which yields 
\begin{align}
\bar{x}_2 = -\bar{x}_1, \quad \bar{x}_{1b} = 0 = \bar{x}_{2b}.
\label{EquiliBr_x-ib_N_x-i}
\end{align}
\editKG{If ${\bf r}_1$, ${\bf r}_2$, and ${\bf r}_b$ are collinear, then it is clear that the equilibrium constraint $\bar{x}_{1b} = 0 = \bar{x}_{2b}$ implies that $\bar{x}_{1} = 0 = \bar{x}_{2}$. If the agents and beacon are not collinear, then the equilibrium constraint $\bar{x}_{1b} = 0 = \bar{x}_{2b}$ implies that} the circles $\mathcal{C}_{1b}$ and $\mathcal{C}_{2b}$ will be two great circles on the unit sphere, which intersect at two distinct antipodal points (see Figure~\ref{Fig:Constraints}). Moreover, \editKG{the equilibrium constraint $\bar{x}_2 = -\bar{x}_1$ implies that} the circles $\mathcal{C}_1$ and $\mathcal{C}_2$ coincide at every relative equilibrium. Substituting \eqref{EquiliBr_x-ib_N_x-i} into \eqref{eqn:2AgentShapeDynamics_x-1}-\eqref{eqn:2AgentShapeDynamics_x-tilde} and simplifying, the closed loop dynamics on the nullclines $\dot{\rho} = \dot{\rho}_{1b} = \dot{\rho}_{2b} = 0$ can be expressed as
\begin{align}
\label{eqn:2AgentDynamicsAtEquilibriumCommonCBParameter}
\dot{\bar{x}}_1
&= -(1-\lambda)\mu (\bar{x}_1 - a) \left(1 - \bar{x}_1^2 \right)  \nonumber \\
& \qquad + \lambda\mu a_0\left(\frac{\rho_{1b}^2 +\rho^2 - \rho_{2b}^2 }{2\rho \rho_{1b}} \right) + \frac{\lambda}{\rho} \Bigl(1 - \tilde{x}\Bigr),  \nonumber \\
\dot{\bar{x}}_2
&= -(1-\lambda)\mu (-\bar{x}_1 - a) \left(1 - \bar{x}_1^2 \right)  \nonumber \\
& \qquad+ \lambda\mu a_0\left(\frac{-\rho_{1b}^2 +\rho^2 + \rho_{2b}^2 }{2\rho \rho_{2b}} \right) + \frac{\lambda}{\rho} \Bigl(1 - \tilde{x}\Bigr),  \nonumber \\
\dot{\bar{x}}_{1b} 
&= -(1-\lambda)\Bigl(\mu (\bar{x}_1 - a) + \frac{1}{\rho}\left(1 - \tilde{x}	\right)\Bigr)\left(\frac{\rho_{1b}^2 +\rho^2 - \rho_{2b}^2 }{2\rho \rho_{1b}} \right) \nonumber \\
& \qquad - \frac{(1-\lambda)\bar{x}_1^2}{\rho_{1b}} + \lambda\mu a_0 + \frac{1}{\rho_{1b}}, \nonumber \\
\dot{\bar{x}}_{2b} 
&= -(1-\lambda)\Bigl(\mu (-\bar{x}_1 - a) + \frac{1}{\rho}\left(1 - \tilde{x}	\right)\Bigr)\left(\frac{\rho_{2b}^2 + \rho^2 - \rho_{1b}^2 }{2\rho \rho_{2b}} \right) \nonumber \\
& \qquad- \frac{(1-\lambda)\bar{x}_1^2}{\rho_{2b}} + \lambda\mu a_0 + \frac{1}{\rho_{2b}}, \nonumber \\
\dot{\tilde{x}}
&= -\mu \bar{x}_1\Bigl(2(1-\lambda)\bar{x}_1   + \frac{\lambda a_0 \rho }{\rho_{1b}\rho_{2b}} \left(\rho_{2b}-\rho_{1b} \right)\Bigr).  
\end{align}
Then by narrowing our focus to the special case when $a_0 = 0$, we arrive at the following result.
\begin{proposition}
Consider a beacon-referenced mutual CB pursuit system with shape dynamics \eqref{eqn:2AgentShapeDynamics_rho}-\eqref{eqn:2AgentShapeDynamics_x-tilde} parametrized by $\mu$, $\lambda$, and the CB \editKG{parameters $a$, $a_0$, with $a_0=0$}. Then, a \textit{circling equilibrium} exists if and only if $a<0$, and the corresponding equilibrium values \editKG{satisfy}
\begin{equation}
\begin{aligned}
\bullet \quad &
\bar{x}_1 = \bar{x}_2 = 0,
\quad 
\bar{x}_{1b} = \bar{x}_{2b} = 0,
\quad 
\tilde{x} = -1,
\\
\bullet \quad &
\rho_{1b} = \rho_{2b},
\quad
\rho = \frac{2\lambda }{(1-\lambda)\mu (-a)}.
\end{aligned}
\label{eqn:a0SameAlphaCircValues}
\end{equation}
\label{prop:existenceProp_1}
\end{proposition}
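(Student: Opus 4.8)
The plan is to work directly from the nullcline dynamics \eqref{eqn:2AgentDynamicsAtEquilibriumCommonCBParameter}, which already fold in the constraints \eqref{EquiliBr_x-ib_N_x-i} coming from $\dot\rho=\dot\rho_{1b}=\dot\rho_{2b}=0$, so that the task reduces to making the five remaining right-hand sides vanish. Setting $a_0=0$ collapses the last equation to $\dot{\tilde{x}}=-2(1-\lambda)\mu\,\bar{x}_1^2$. Since $\lambda\in(0,1)$ and $\mu>0$, the coefficient $2(1-\lambda)\mu$ is strictly positive, so $\dot{\tilde{x}}=0$ forces $\bar{x}_1=0$, and hence $\bar{x}_2=-\bar{x}_1=0$. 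This is the decisive first step: it removes the bulk of the nonlinear coupling, after which each remaining equation reads as an essentially affine relation among the unknowns $\rho,\rho_{1b},\rho_{2b},\tilde{x}$.

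With $\bar{x}_1=0$ and $a_0=0$, the $\dot{\bar{x}}_1$ and $\dot{\bar{x}}_2$ equations both collapse to the single condition $(1-\lambda)\mu a+\tfrac{\lambda}{\rho}(1-\tilde{x})=0$. Because $\tilde{x}=\mathbf{x}_1\cdot\mathbf{x}_2\in[-1,1]$ gives $1-\tilde{x}\ge 0$, and $\lambda,(1-\lambda),\mu,\rho$ are all strictly positive, this can hold only if $a\le 0$; this is precisely where the necessity of $a<0$ begins to enter. I would then record the equivalent form $\tfrac{1}{\rho}(1-\tilde{x})=-\tfrac{(1-\lambda)\mu a}{\lambda}$ and use it to simplify the recurring combination $\mu(\bar{x}_1-a)+\tfrac{1}{\rho}(1-\tilde{x})$ that appears in the beacon-bearing equations: evaluated at $\bar{x}_1=0$ it collapses cleanly to $-\mu a/\lambda$. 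Spotting this cancellation is the single algebraic step that makes the remainder transparent.

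Next I would impose $\dot{\bar{x}}_{1b}=\dot{\bar{x}}_{2b}=0$. Using the simplification above and clearing the factors $1/\rho_{1b}$ and $1/\rho_{2b}$, these become $\tfrac{(1-\lambda)\mu a}{\lambda}\cdot\tfrac{\rho_{1b}^2+\rho^2-\rho_{2b}^2}{2\rho}+1=0$ together with its $1\leftrightarrow 2$ counterpart. If $a=0$ each identity degenerates to $1=0$, so no equilibrium exists and we conclude $a<0$. For $a<0$ the coefficient $(1-\lambda)\mu a/\lambda$ is nonzero, so subtracting the two identities cancels both the $\rho^2$ term and the constant, yielding $\rho_{1b}^2=\rho_{2b}^2$, i.e. $\rho_{1b}=\rho_{2b}$ since both are positive. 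Substituting $\rho_{1b}=\rho_{2b}$ back into either identity leaves $\tfrac{(1-\lambda)\mu a}{\lambda}\cdot\tfrac{\rho}{2}+1=0$, hence $\rho=\tfrac{2\lambda}{(1-\lambda)\mu(-a)}$, whose positivity reconfirms $a<0$. Feeding this $\rho$ into $\tfrac{1}{\rho}(1-\tilde{x})=-\tfrac{(1-\lambda)\mu a}{\lambda}$ gives $1-\tilde{x}=2$, so $\tilde{x}=-1$. This pins down every value in \eqref{eqn:a0SameAlphaCircValues} and completes necessity.

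For sufficiency I would reverse the argument: for any $a<0$, assign $\bar{x}_1=\bar{x}_2=\bar{x}_{1b}=\bar{x}_{2b}=0$, $\tilde{x}=-1$, $\rho_{1b}=\rho_{2b}$ (any common positive value), and $\rho=\tfrac{2\lambda}{(1-\lambda)\mu(-a)}$, then verify by direct substitution that every right-hand side of \eqref{eqn:2AgentShapeDynamics_rho}--\eqref{eqn:2AgentShapeDynamics_x-tilde} vanishes. I expect the main obstacle to lie not in this algebra but in confirming geometric feasibility: one must check, using the constraint analysis of Section~\ref{sec:Constraints}, that the prescribed point is admissible — in particular that $\tilde{x}=-1$ is a realizable value of $\mathbf{x}_1\cdot\mathbf{x}_2$ consistent with $\bar{x}_{1b}=\bar{x}_{2b}=0$ and $\rho_{1b}=\rho_{2b}$ — so that the shape-space equilibrium genuinely lifts to a circling relative equilibrium of the full system rather than to an infeasible configuration.
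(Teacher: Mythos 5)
Your proposal is correct, and its skeleton matches the paper's proof: both work on the nullclines \eqref{eqn:2AgentDynamicsAtEquilibriumCommonCBParameter}, observe that $a_0=0$ reduces the last equation to $\dot{\tilde{x}}=-2(1-\lambda)\mu\bar{x}_1^2$ so that $\bar{x}_1=\bar{x}_2=0$ is forced, and then extract the same three scalar conditions. Where you diverge is in the endgame. The paper solves the $\dot{\bar{x}}_1$ condition for $\rho$ as a function of $\tilde{x}$ (concluding $a<0$ there), substitutes that expression into the two beacon-bearing conditions, and then runs a two-case analysis --- case (I) $\rho_{1b}=\rho_{2b}$, $\tilde{x}=-1$ versus case (II) $\rho_{1b}\neq\rho_{2b}$ with a quadratic relation in $\tilde{x}$ --- ruling out case (II) by deriving $\tilde{x}=-1$ as a contradiction. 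You instead keep $\rho$ symbolic, use the identity $\mu(\bar{x}_1-a)+\tfrac{1}{\rho}(1-\tilde{x})=-\mu a/\lambda$ to simplify both beacon equations, and subtract them to get $\rho_{1b}^2=\rho_{2b}^2$ outright; this kills the case split, determines $\rho$ independently of $\tilde{x}$, and only then recovers $\tilde{x}=-1$. Your route is slightly more streamlined; the paper's buys an explicit formula $\rho=\lambda(1-\tilde{x})/((1-\lambda)\mu(-a))$ earlier, which makes the role of $\tilde{x}$ visible before it is pinned down. Your closing caution about geometric feasibility (that $\tilde{x}=-1$ together with $\bar{x}_{1b}=\bar{x}_{2b}=0$ and $\rho_{1b}=\rho_{2b}$ must satisfy the Section~\ref{sec:Constraints} constraints) is exactly the check the paper dispatches in its final sentence, so nothing is missing.
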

\begin{proof}
In this case, it is clear that $\dot{\tilde{x}}=0$ if and only if $\bar{x}_1=0$, and therefore, from \eqref{eqn:2AgentDynamicsAtEquilibriumCommonCBParameter}, we can conclude that the following conditions must hold true at an equilibrium
\begin{subequations}
\begin{align}
(1-\lambda)\mu a  + \frac{\lambda}{\rho} \Bigl(1 - \tilde{x}\Bigr)
&=
0,
\label{eqn:2AgentDynamicsAtEquilibriumCommonCBParameter_a0Zero__1}
\\
(1-\lambda)\left( \frac{1 - \tilde{x}}{\rho} - \mu a \right)\left(\frac{\rho_{1b}^2 +\rho^2 - \rho_{2b}^2 }{2\rho \rho_{1b}} \right)
&=
\frac{1}{\rho_{1b}}, 
\label{eqn:2AgentDynamicsAtEquilibriumCommonCBParameter_a0Zero__2}
\\
(1-\lambda)\left( \frac{1 - \tilde{x}}{\rho} -\mu a \right)\left(\frac{\rho_{2b}^2 +\rho^2 - \rho_{1b}^2 }{2\rho \rho_{2b}} \right)
&=
\frac{1}{\rho_{2b}}.
\label{eqn:2AgentDynamicsAtEquilibriumCommonCBParameter_a0Zero__3}
\end{align}
\end{subequations}
Now, if $\tilde{x} = 1$, the first condition \eqref{eqn:2AgentDynamicsAtEquilibriumCommonCBParameter_a0Zero__1} holds true if and only if $a=0$. But with these choices for $\tilde{x}$ and $a$, the last two conditions \eqref{eqn:2AgentDynamicsAtEquilibriumCommonCBParameter_a0Zero__2}-\eqref{eqn:2AgentDynamicsAtEquilibriumCommonCBParameter_a0Zero__3} lead to 
$\frac{1}{\rho_{1b}} = \frac{1}{\rho_{2b}} = 0$, which cannot be true since both $\rho_{1b}$ and $\rho_{2b}$ are finite. Therefore we must have $\tilde{x} \neq 1$ at an equilibrium, and then the first condition \eqref{eqn:2AgentDynamicsAtEquilibriumCommonCBParameter_a0Zero__1} yields the equilibrium value of $\rho$ as
\begin{equation}
\rho = \frac{\lambda(1 - \tilde{x})}{(1-\lambda)\mu (-a)}.
\label{eqn:rhoEquilibrium_a0zero}
\end{equation}
As $\rho$ must be positive and finite, \eqref{eqn:rhoEquilibrium_a0zero} yields a meaningful solution if and only if $a < 0$. Substituting this solution for $\rho$ into \eqref{eqn:2AgentDynamicsAtEquilibriumCommonCBParameter_a0Zero__2}-\eqref{eqn:2AgentDynamicsAtEquilibriumCommonCBParameter_a0Zero__3}, we have 
\begin{subequations}
\begin{align}
\frac{1}{2 \rho_{1b}}\biggl[-\left(1 - \tilde{x}	\right)\left(\frac{\rho_{1b}^2 - \rho_{2b}^2 }{\rho^2} \right) + 1 + \tilde{x} \biggr]
&=0,
\label{eqn:EQ_Final_Conditions_on_Nullcline__1}
\\
\frac{1}{2 \rho_{2b}}\biggl[-\left(1 - \tilde{x}	\right)\left(1 - \frac{\rho_{1b}^2 - \rho_{2b}^2 }{2\rho^2} \right) + 2 \biggr]
&= 0.  
\label{eqn:EQ_Final_Conditions_on_Nullcline__2}
\end{align}
\end{subequations}
Clearly \eqref{eqn:EQ_Final_Conditions_on_Nullcline__1} holds true if and only if either of the following conditions hold:
\\
\indent (I) $\displaystyle \rho_{1b} = \rho_{2b}$ and $\displaystyle \tilde{x} = -1$, or
\\
\indent (II) $\displaystyle \rho_{1b} \neq \rho_{2b}$ and $\displaystyle \left(\frac{\rho_{1b}^2 - \rho_{2b}^2 }{\rho^2} \right) = \frac{1+\tilde{x}}{1-\tilde{x}}$, \editKG{with $\tilde{x} \neq -1$}.

Then it is straightforward to verify that the first set of conditions (I) satisfy \eqref{eqn:EQ_Final_Conditions_on_Nullcline__2}. However, by substituting the second set of conditions (II) into \eqref{eqn:EQ_Final_Conditions_on_Nullcline__2}, we have
\begin{align}
&&
-(1-\tilde{x}) \left(1 - \frac{1}{2}\left(\frac{1+\tilde{x}}{1-\tilde{x}}\right) \right) +2 
&= 0
\nonumber \\
\Rightarrow &&
-(1-\tilde{x}) + \frac{1+\tilde{x}}{2}  + 2 
&= 0
\nonumber \\
\Rightarrow &&
\frac{3}{2}\left( \tilde{x} + 1 \right)
&= 0,
\end{align}
which is true if and only if $\tilde{x} = -1$. But, this contradicts the stated condition (II). Therefore this option is not viable, and (I) must hold true at an equilibrium.

\editKG{Lastly, it is clear that the proposed equilibrium values \eqref{eqn:a0SameAlphaCircValues} satisfy the constraints \eqref{Constraints_2a}-\eqref{Constraints_2b} and are therefore valid solutions.} This concludes our proof.
\end{proof}
\vspace{.25cm}
\begin{figure}[t!]
\begin{center}
  \includegraphics[width=0.45\textwidth]{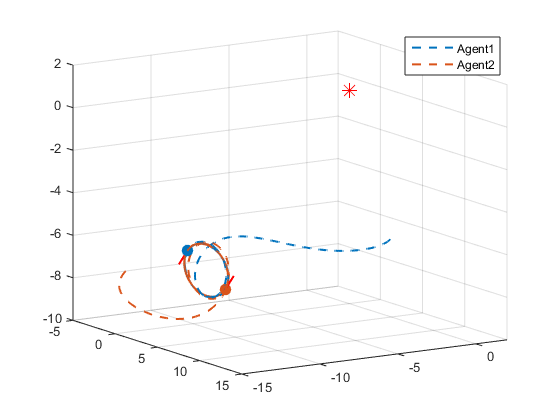}
  \caption{\small{Illustration of the type of circling equilibrium described in Proposition \ref{prop:existenceProp_1}. The asterisk denotes the location of the beacon, and the agents follow a circling trajectory on a plane perpendicular to the axis which passes through the beacon.}}
  \vspace{-1.5em}
  \label{Fig:OffsetCircle}
\end{center}
\end{figure}
\editKG{Figure \ref{Fig:OffsetCircle} illustrates the type of circling equilibrium which is described in Proposition \ref{prop:existenceProp_1}. Note that the values for $\rho_{1b}$ and $\rho_{2b}$ (i.e. the distance of each agent from the beacon) are the same, but the particular values are determined by initial conditions. However, the separation between the agents (i.e. $\rho$) is determined by the control parameters.}

\vspace{.25cm}

We now shift our attention to the case where $a_0 \neq 0$, and show that circling equilibria exist in this scenario as well.
\begin{proposition}
Consider a beacon-referenced mutual CB pursuit system with shape dynamics \eqref{eqn:2AgentShapeDynamics_rho}-\eqref{eqn:2AgentShapeDynamics_x-tilde} parametrized by $\mu$, $\lambda$, and the CB \editKG{parameters $a$, $a_0$, with $a_0\neq0$}. The following statements are true.
\\
\indent (a) Whenever $(1-\lambda) a + \lambda a_0 < 0$, a circling equilibrium exists, and the corresponding equilibrium values are given by
\begin{equation}
\begin{aligned}
\bullet \quad &
\bar{x}_1 = \bar{x}_2 = 0,
\quad 
\bar{x}_{1b} = \bar{x}_{2b} = 0,
\quad 
\tilde{x} = -1,
\\
\bullet \quad &
\rho_{1b} = \rho_{2b} 
= \frac{\lambda}{-\mu \bigl((1-\lambda)a + \lambda a_0\bigr)},
\\
\bullet \quad &
\rho 
= 2\rho_{1b} = \frac{2\lambda}{-\mu \bigl((1-\lambda)a + \lambda a_0\bigr)}.
\end{aligned}
\end{equation}
\\
\indent (b) Whenever $a_0 < 0$, $a>0$, and $(1-\lambda) a + \lambda a_0 < 0$, a circling equilibrium exists, and the corresponding equilibrium values are given by
\begin{equation}
\begin{aligned}
\bullet \quad &
\bar{x}_1 = \bar{x}_2 = 0,
\quad 
\bar{x}_{1b} = \bar{x}_{2b} = 0,
\quad 
\tilde{x} = 1,
\\
\bullet \quad &
\rho_{1b} = \rho_{2b}
= \frac{\lambda a_0}{\mu\Bigl((1-\lambda)^2 a^2 - \lambda^2 a_0^2\Bigr)},
\\
\bullet \quad &
\rho 
= \frac{-2(1-\lambda) a}{\mu\Bigl((1-\lambda)^2 a^2 - \lambda^2 a_0^2\Bigr)}.
\end{aligned}
\end{equation}
\label{prop:existenceProp_2}
\end{proposition}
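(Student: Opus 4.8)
The plan is to treat this as a direct verification of existence, exactly in the spirit of the final line of Proposition~\ref{prop:existenceProp_1}: rather than re-deriving the candidates from scratch, I would substitute the proposed equilibrium values into the reduced nullcline dynamics \eqref{eqn:2AgentDynamicsAtEquilibriumCommonCBParameter}, confirm that all five right-hand sides vanish, confirm that $\rho$, $\rho_{1b}$, $\rho_{2b}$ are strictly positive under the stated parameter hypotheses, and finally check the geometric constraints \eqref{Constraints_2a}--\eqref{Constraints_2c}. Two simplifications apply to both parts. First, since every proposed equilibrium has $\bar{x}_1 = 0$, the $\dot{\tilde{x}}$ equation vanishes automatically. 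Second, the choice $\rho_{1b} = \rho_{2b}$ makes the two ``law-of-cosines'' factors collapse to the common value $\rho/(2\rho_{1b})$ and renders the agent-$1$ and agent-$2$ equations identical under $\bar{x}_2 = -\bar{x}_1 = 0$; hence it suffices to verify $\dot{\bar{x}}_1 = 0$ and $\dot{\bar{x}}_{1b} = 0$.

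For part (a) I would exploit that $\tilde{x} = -1$ together with $\rho = 2\rho_{1b}$ forces the cosine factor to equal $1$, so that \eqref{eqn:2AgentDynamicsAtEquilibriumCommonCBParameter} simplifies dramatically. Setting $\dot{\bar{x}}_1 = 0$ then gives $(1-\lambda)\mu a + \lambda\mu a_0 + 2\lambda/\rho = 0$, which solves to the stated $\rho = 2\lambda / \bigl(-\mu((1-\lambda)a + \lambda a_0)\bigr)$ and is positive precisely when $(1-\lambda)a + \lambda a_0 < 0$. Substituting $\rho = 2\rho_{1b}$ into $\dot{\bar{x}}_{1b} = 0$ and using the same expression then isolates $\rho_{1b} = \lambda / \bigl(-\mu((1-\lambda)a + \lambda a_0)\bigr)$, consistent with $\rho = 2\rho_{1b}$; positivity is governed by the same single inequality.

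Part (b) is the delicate case and the main obstacle. Here $\tilde{x} = 1$ annihilates every $(1-\tilde{x})$ term, so $\dot{\bar{x}}_1 = 0$ reduces to $(1-\lambda) a + \lambda a_0\, c = 0$ with $c := \rho/(2\rho_{1b})$, pinning the cosine factor at $c = -(1-\lambda)a / (\lambda a_0)$ rather than at $1$. I would then feed $c$ (together with the relation $(1-\lambda)\mu a = -\lambda\mu a_0 c$) into $\dot{\bar{x}}_{1b} = 0$, which collapses to $\lambda\mu a_0 (1 - c^2) + 1/\rho_{1b} = 0$ and therefore to $\rho_{1b} = \lambda a_0 / \bigl(\mu((1-\lambda)^2 a^2 - \lambda^2 a_0^2)\bigr)$, with $\rho = 2 c \rho_{1b}$ giving the stated value of $\rho$. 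The real work is the sign bookkeeping: with $a_0 < 0$ the numerator $\lambda a_0$ is negative, so $\rho_{1b} > 0$ demands $(1-\lambda)^2 a^2 - \lambda^2 a_0^2 < 0$, i.e. $|(1-\lambda)a| < |\lambda a_0|$, which together with $a>0$ and $a_0<0$ is exactly $(1-\lambda)a + \lambda a_0 < 0$; the same inequality simultaneously yields $\rho > 0$ and, being equivalent to $c < 1$, the strict triangle inequality $\rho < \rho_{1b} + \rho_{2b}$ in \eqref{Constraints_2c}.

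Finally, for both parts I would confirm geometric admissibility. Because $\bar{x}_1 = \bar{x}_2 = \bar{x}_{1b} = \bar{x}_{2b} = 0$, constraints \eqref{Constraints_2a}--\eqref{Constraints_2b} hold trivially. For part (a) the relation $\rho = \rho_{1b}+\rho_{2b}$ places the configuration at the collinear boundary of \eqref{Constraints_2c}, with the beacon midway between the antipodally circling agents, so $\tilde{x}=-1$ is realizable. For part (b) the four vanishing projections force each $\mathbf{x}_i$ to be orthogonal to both $\mathbf{r}_{ib}$ and $\mathbf{r}_{12}$, hence normal to the common plane of the three position vectors; thus $\mathbf{x}_1 = \pm \mathbf{x}_2$, and $\tilde{x}=1$ is one of the two realizable values. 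I expect the only genuine difficulty to be the careful tracking of signs in part (b) and the verification that the three stated hypotheses there are precisely what is needed for positivity and for the triangle inequality, rather than any conceptual obstruction.
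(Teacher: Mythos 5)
Your verification strategy is sound and all of your computations check out: with $\bar{x}_1=\bar{x}_2=\bar{x}_{1b}=\bar{x}_{2b}=0$ and $\rho_{1b}=\rho_{2b}$ the law-of-cosines factors do collapse to $c=\rho/(2\rho_{1b})$, the two agents' equations coincide, $\dot{\tilde{x}}$ vanishes identically, and your sign analysis in part (b) correctly shows that positivity of $\rho_{1b}$ and $\rho$ and the strict triangle inequality in \eqref{Constraints_2c} are all equivalent to $(1-\lambda)a+\lambda a_0<0$ given $a>0$, $a_0<0$. However, your route is genuinely different from the paper's. You treat the proposition as a pure existence claim and verify the proposed witness by substitution; the paper instead \emph{derives} the equilibria from necessary conditions: it first forms the differences $\dot{\bar{x}}_1-\dot{\bar{x}}_2$ and $\dot{\bar{x}}_{1b}-\dot{\bar{x}}_{2b}$ (equations \eqref{eqn:x1DotDiff}--\eqref{eqn:x1bDotDiff}) to prove that $\rho_{1b}=\rho_{2b}$ must hold on the nullcline $\bar{x}_1=0$, then splits into the collinear case (yielding part (a), with $\tilde{x}=-1$ forced by subtracting \eqref{eqn:equilibriumDynamicsTwoAgentA0NonZeroTerm_1_subs} from \eqref{eqn:equilibriumDynamicsTwoAgentA0NonZeroTerm_2_subs}) and the non-collinear case, where a geometric argument about the great circles $\mathcal{C}_1$, $\mathcal{C}_{1b}$, $\mathcal{C}_{2b}$ forces $\tilde{x}=\pm1$ and the $\tilde{x}=-1$ branch is shown to degenerate back to collinearity. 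Since the subsequent Remark confirms the proposition asserts only sufficient conditions, your verification establishes exactly the stated claim with less machinery; what the paper's derivation buys in exchange is the additional information that, on the branch $\bar{x}_1=0$, the two listed families (together with the leftover case flagged in the Remark) exhaust the circling equilibria --- information your substitution argument does not provide, though it is not needed for the proposition as literally stated. Your realizability discussion (constructing the collinear antipodal configuration for part (a) and arguing $\mathbf{x}_i$ normal to the plane of the triangle for part (b)) is at least as careful as the paper's one-line appeal to the constraints of Section~\ref{sec:Constraints}.
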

\begin{proof}
It directly follows from \eqref{eqn:2AgentDynamicsAtEquilibriumCommonCBParameter} that $\dot{\tilde{x}}=0$ if $\bar{x}_1 = 0$, and in that situation we can express the closed loop dynamics on the nullclines $\dot{\rho} = \dot{\rho}_{1b} = \dot{\rho}_{2b} = \dot{\tilde{x}}=0$ as
\begin{align}
\dot{\bar{x}}_1
&= (1-\lambda)\mu a + \lambda\mu a_0\left(\frac{\rho_{1b}^2 +\rho^2 - \rho_{2b}^2 }{2\rho \rho_{1b}} \right) + \frac{\lambda}{\rho} \Bigl(1 - \tilde{x}\Bigr),  \nonumber \\
\dot{\bar{x}}_2
&= (1-\lambda)\mu a + \lambda\mu a_0\left(\frac{\rho_{2b}^2 +\rho^2 - \rho_{1b}^2 }{2\rho \rho_{2b}} \right) + \frac{\lambda}{\rho} \Bigl(1 - \tilde{x}\Bigr),  \nonumber \\
\dot{\bar{x}}_{1b} 
&= 
(1-\lambda)\left( \mu a - \frac{1 - \tilde{x}}{\rho} \right) \left(\frac{\rho_{1b}^2 + \rho^2 - \rho_{2b}^2}{2\rho \rho_{1b}} \right)  
\nonumber \\
& \qquad + \lambda\mu a_0 + \frac{1}{\rho_{1b}}, \nonumber \\
\dot{\bar{x}}_{2b} 
&= 
(1-\lambda)\left( \mu a - \frac{1 - \tilde{x}}{\rho} \right) \left(\frac{\rho_{2b}^2 + \rho^2 - \rho_{1b}^2}{2\rho \rho_{2b}} \right)  
\nonumber \\
& \qquad + \lambda\mu a_0 + \frac{1}{\rho_{2b}}.   
\label{eqn:2AgentDynamicsAtEquilibriumCommonCBParameter_x1_zero}
\end{align}

We note that taking the difference of $\dot{\bar{x}}_1-\dot{\bar{x}}_2$ yields
\begin{align}
\label{eqn:x1DotDiff}
\dot{\bar{x}}_1-\dot{\bar{x}}_2 
&= 
\lambda\mu a_0 \left( \frac{\rho_{1b}^2 +\rho^2 - \rho_{2b}^2 }{2\rho \rho_{1b}} - \frac{\rho_{2b}^2 +\rho^2 - \rho_{1b}^2 }{2\rho \rho_{2b}} \right) 
\nonumber \\
%
%
&= 
\frac{\lambda\mu a_0(\rho_{1b}-\rho_{2b}) }{2\rho\rho_{1b}\rho_{2b}} \Bigl(\left(\rho_{1b} + \rho_{2b} \right)^2 -\rho^2 \Bigr),
\end{align}
and similar calculations lead to 
\begin{align}
\label{eqn:x1bDotDiff}
\dot{\bar{x}}_{1b}-\dot{\bar{x}}_{2b} 
&= 
(1-\lambda) \left( \mu a - \frac{1 - \tilde{x}}{\rho} \right) \left(\frac{\rho_{1b}-\rho_{2b}}{2\rho\rho_{1b}\rho_{2b}}\right) 
\nonumber \\
&\qquad 
\times \Bigl( (\rho_{1b} + \rho_{2b})^2 -\rho^2 \Bigr) + \frac{\rho_{2b}-\rho_{1b}}{\rho_{1b}\rho_{2b}}.
\end{align}
Then by setting both \eqref{eqn:x1DotDiff} and \eqref{eqn:x1bDotDiff} equal to zero, i.e. by setting the derivatives of $\bar{x}_2$ and $\bar{x}_{2b}$ identical to the derivatives of $\bar{x}_1$ and $\bar{x}_{1b}$ respectively, we can conclude that $\rho_{2b}$ must be equal to $\rho_{1b}$ at an equilibrium. Substituting this equivalence into \eqref{eqn:2AgentDynamicsAtEquilibriumCommonCBParameter_x1_zero}, we can further conclude that the following conditions must hold true at an equilibrium
\begin{align}
(1-\lambda)\mu a + \lambda\mu a_0 \left( \frac{\rho}{2\rho_{1b}} \right) + \lambda \left(\frac{1 - \tilde{x}}{\rho} \right)
&= 0,
\label{eqn:equilibriumDynamicsTwoAgentA0NonZeroTerm_1}
\\
(1-\lambda) \left(\mu a - \frac{1 - \tilde{x}}{\rho} \right) \left(\frac{\rho}{2 \rho_{1b}} \right)  + \lambda\mu a_0 + \frac{1}{\rho_{1b}}
&= 0.
\label{eqn:equilibriumDynamicsTwoAgentA0NonZeroTerm_2}
\end{align}

\editKG{If the two agents and the beacon are collinear, then the constraint $\rho_{1b} = \rho_{2b}$ implies that $\rho = 2\rho_{1b}$. Substituting this equivalence into 
\eqref{eqn:equilibriumDynamicsTwoAgentA0NonZeroTerm_1} and \eqref{eqn:equilibriumDynamicsTwoAgentA0NonZeroTerm_2} yields
\begin{align}
(1-\lambda)\mu a + \lambda\mu a_0  + \lambda \left(\frac{1 - \tilde{x}}{2\rho_{1b}} \right)
&= 0,
\label{eqn:equilibriumDynamicsTwoAgentA0NonZeroTerm_1_subs}
\\
(1-\lambda)\mu a + \lambda\mu a_0  - (1-\lambda)\left(\frac{1 - \tilde{x}}{2\rho_{1b}}  \right) + \frac{1}{\rho_{1b}}
&= 0,
\label{eqn:equilibriumDynamicsTwoAgentA0NonZeroTerm_2_subs}
\end{align}
from which it follows that $\tilde{x}=-1$. Substituting this value back into \eqref{eqn:equilibriumDynamicsTwoAgentA0NonZeroTerm_1_subs} results in
\begin{align}
\rho_{1b} &= \frac{\lambda}{-\mu \bigl((1-\lambda)a + \lambda a_0\bigr)},
\end{align}
and this is a meaningful solution if and only if $(1-\lambda)a + \lambda a_0 < 0$. Since these values satisfy all constraints introduced at the beginning of Section \ref{sec:Constraints}, part (a) of the proposition is established.}

\editKG{If the agents and beacon are not collinear, then the equilibrium constraint $\bar{x}_1=\bar{x}_2=0$ implies that the circles $\mathcal{C}_{1}$ and $\mathcal{C}_{2}$ will coincide \editKG{as} a great circle around $\mathbf{r}_{12}$ (see Figure~\ref{Fig:Constraints}). Moreover, the axes of $\mathcal{C}_{1b}$, $\mathcal{C}_{2b}$ and $\mathcal{C}_{1}$ lie on the same plane, which enforces that these three great circles will intersect each other at two antipodal points. As a consequence, $\tilde{x}$ can be either $1$ or $-1$ at such an equilibrium.}

If $\tilde{x}=1$, then \eqref{eqn:equilibriumDynamicsTwoAgentA0NonZeroTerm_1} allows us to express $\rho$ as
\begin{equation}
\rho 
= 
-2 \left(\frac{1-\lambda}{\lambda} \right)\left(\frac{a}{a_0} \right)\rho_{1b}.
\label{Existence_4.2_Cond1}
\end{equation}
As both $\rho$ and $\rho_{1b}$ must be positive, \eqref{Existence_4.2_Cond1} is meaningful if and only if $a/a_0 < 0$. Also, since $\rho_{1b} = \rho_{2b}$, substituting \eqref{Existence_4.2_Cond1} into constraint \eqref{Constraints_2c} yields 
\begin{align}
\left(\frac{1-\lambda}{\lambda}\right) \left(\frac{-a}{a_0}\right) < 1,
\label{eqn:prop42partBconstraint}
\end{align}
with the strict inequality resulting from the fact that we have assumed that the agents and beacon are not collinear. The combination of \eqref{eqn:prop42partBconstraint} with $a/a_0 < 0$ yields two possibilities:
\begin{itemize}
\item Case 1: $a_0>0$, $a<0$, $(1-\lambda) a + \lambda a_0 > 0$;
\item Case 2: $a_0<0$, $a>0$, $(1-\lambda) a + \lambda a_0 < 0$.
\end{itemize}
Also, substitution of \eqref{Existence_4.2_Cond1} into \eqref{eqn:equilibriumDynamicsTwoAgentA0NonZeroTerm_2} leads to
\begin{equation}
-\mu\left(\frac{(1-\lambda)^2}{\lambda} \right) \left(\frac{a^2}{a_0} \right) + \lambda\mu a_0 + \frac{1}{\rho_{1b}}
=
0,
\end{equation}
which in turn yields
\begin{align}
\rho_{1b} 
= \frac{1}{\mu\left(\frac{(1-\lambda)^2}{\lambda} \right) \left(\frac{a^2}{a_0} \right) - \lambda\mu a_0} \nonumber \\
= \frac{\lambda a_0}{\mu\Bigl((1-\lambda)^2 a^2 - \lambda^2 a_0^2\Bigr)}.
\end{align}
This yields a meaningful solution if and only if 
\begin{equation}
a_0\Bigl((1-\lambda)^2 a^2 - \lambda^2 a_0^2\Bigr) > 0, 
\end{equation}
i.e. 
\begin{equation}
a_0\Bigl((1-\lambda) a - \lambda a_0\Bigr)\Bigl((1-\lambda) a + \lambda a_0\Bigr) > 0. 
\end{equation}
It is straightforward to verify that Case 2 (but not Case 1) satisfies this constraint, leading to the conditions of part (b) of the proposition.

On the other hand, if $\tilde{x}=-1$, then \eqref{eqn:equilibriumDynamicsTwoAgentA0NonZeroTerm_1}-\eqref{eqn:equilibriumDynamicsTwoAgentA0NonZeroTerm_2} simplifies to
\begin{align}
\frac{1}{2\rho\rho_{1b}}\Bigl[2\rho_{1b}\bigl((1-\lambda)\mu a \rho  + 2\lambda\bigr) + \lambda \mu a_0 \rho^2\Bigr]
&=
0,
\label{Existence_4.2_Cond2}
\\
\frac{1}{2\rho_{1b}}\Bigl[(1-\lambda)\mu a \rho  + 2\lambda + 2\lambda\mu a_0 \rho_{1b} \Bigr]
&=
0.
\label{Existence_4.2_Cond3}
\end{align}
\editKG{However, it follows from \eqref{Existence_4.2_Cond2}-\eqref{Existence_4.2_Cond3} that at an equilibrium we must have $\rho = 2\rho_{1b}$, i.e. this corresponds to the collinear configuration addressed earlier in part (a) of the proposition. (Note that the condition $\tilde{x}=\pm 1$ is a necessary condition of the agents and beacon being a non-collinear configuration, but it is not sufficient.)} This completes the proof.
\end{proof}
\vspace{.25cm}
\begin{figure}[t!]
\begin{center}
  \includegraphics[width=0.45\textwidth]{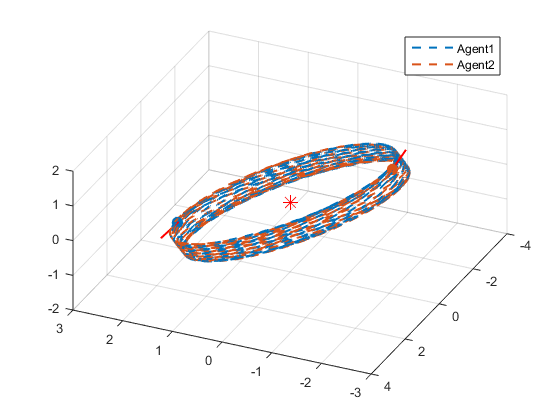}
  \caption{\small{Illustration of the type of circling equilibrium described in Proposition \ref{prop:existenceProp_2}, part (a). The asterisk denotes the location of the beacon, and the agents slowly converge to a planar circling trajectory centered on the beacon.}}
  \vspace{-1.5em}
  \label{Fig:CenteredCircle}
\end{center}
\end{figure}
\begin{figure}[t!]
\begin{center}
  \includegraphics[width=0.45\textwidth]{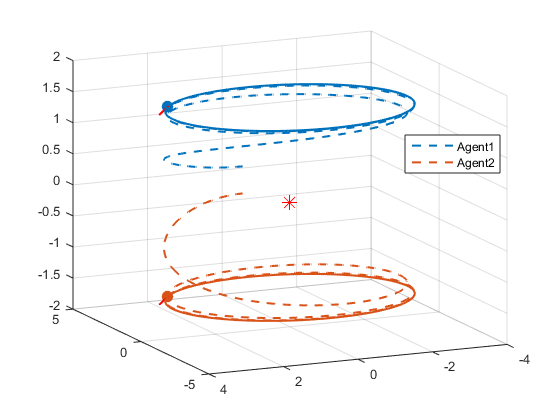}
  \caption{\small{Illustration of the type of circling equilibrium described in Proposition \ref{prop:existenceProp_2}, part (b). Simulations indicate that this equilibrium has a very small region of attraction.}}
  \vspace{-1.5em}
  \label{Fig:StackedCircle}
\end{center}
\end{figure}
\editKG{Figure \ref{Fig:CenteredCircle} and \ref{Fig:StackedCircle}	depict MATLAB simulations corresponding to the equilibria described in Proposition \ref{prop:existenceProp_2} part (a) and part (b), respectively. Simulations indicate that the type of circling equilibria described in part (a) of the proposition (i.e. Fig.\ref{Fig:CenteredCircle}) are attractive, but that convergence occurs on a long time-scale, while the equilibria described in part (b) of the proposition (i.e. Fig. \ref{Fig:StackedCircle}) have a very small region of attraction. A rigorous stability analysis will be carried out in future work.}
\begin{remark}
We note that Proposition \ref{prop:existenceProp_2} provides only sufficient (and not necessary) conditions for existence of circling equilibria in the case $a_0\neq 0$. This stems from the fact that there remains another possibility for $\dot{\tilde{x}}=0$ in \eqref{eqn:2AgentDynamicsAtEquilibriumCommonCBParameter}, namely 
$\displaystyle \bar{x}_1 \neq 0$, $\displaystyle \rho_{1b} \neq \rho_{2b}$, and $\displaystyle \frac{\lambda a_0 \rho }{\rho_{1b}\rho_{2b}} \left(\rho_{2b}-\rho_{1b} \right) + 2(1-\lambda)\bar{x}_1 = 0$. In future work we will analyze this case to determine whether it presents a legitimate additional solution corresponding to circling equilibria. 
\end{remark}
%
%
%

%
%
%
\section{Conclusion}
\label{sec:Conclusion}
In this paper we have presented a new control law which implements a beacon-referenced version of the CB control law. We have analyzed the 2-agent (with beacon) system and demonstrated the existence of circling equilibria for particular parameter choices. The circling equilibria obtained in this setting have a radius determined by control parameters rather than by initial conditions (as was the case for mutual CB pursuit without a beacon \cite{CP_3D_Kevin_2010}), which offers a method for designing circling trajectories with a desired diameter.

Future work will focus on stability analysis for the special solutions presented here. Additional directions for research include exploration of the solution space for systems with $a_1 \neq a_2$, as well as analysis of the beacon-referenced cyclic pursuit system (i.e. $n>2$).

\bibliographystyle{IEEEtran}
\bibliography{Draft_Refs.bib}
\end{document}